\newtheorem{theorem}{Theorem}
\newtheorem{lemma}[theorem]{Lemma}
\newdefinition{definition}{Definition}
\newtheorem{remark}{Remark}
\newproof{pf}{Proof}
\newproof{pot}{Proof of Theorem \ref{thm2}}
\journal{Automatica}
\begin{document}

\begin{frontmatter}


 \title{Structural Controllability of Switched Linear Systems\tnoteref{label1}}

 \author[Liu]{Xiaomeng Liu}
 \ead{liuxm1986@gmail.com}
 \author[Lin]{Hai Lin\corref{cor1}}
 \ead{hlin1@nd.edu}
  \author[Liu]{Ben M. Chen}
 \ead{bmchen@nus.edu.sg}
 \address[Liu]{Dept. of Electrical and Computer Engineering, National University of Singapore,
Singapore}
 \address[Lin]{Dept.
of Electrical Engineering, Univ of Notre Dame, Notre Dame, IN 46556, USA}
 \cortext[cor1]{Corresponding author. Hai Lin. tel 574-6313177 fax 574-6314393. Financial supports from NSF-CNS-1239222 and NSF-EECS-1253488 for this work are greatly acknowledged.
\thanks[{Manuscript titled ``Structural Controllability of Switched
Linear Systems'' is submitted to Automatica, 2013.}
 \cortext[cor2]{*}
 }

\begin{abstract}
In this paper, the structural controllability of switched linear
systems is investigated. The structural
controllability is a generalization of the traditional
controllability concept for dynamical systems, and purely based on
the graphic topologies among state and input vertices. First, two
kinds of graphic representations of switched linear systems are
proposed. Second, graph theory based necessary and sufficient
characterizations of the structural controllability for switched
linear systems are presented. Finally, the paper concludes with
illustrative examples and discussions on the results and future
work.
\end{abstract}

\begin{keyword}
Structural controllability\sep switched linear system\sep graphic
interpretation.


\end{keyword}

\end{frontmatter}

\section{Introduction}
%
%
%
%


As a special class of hybrid control systems, a switched linear
system consists of several linear subsystems and a rule that
orchestrates the switching among them. Switching between different
subsystems or different controllers can greatly enrich the control
strategies and may achieve better control performances than fixed
(non-switching) controllers (\cite{KJ,AWV}). Besides, switched
linear systems also have promising applications in control of
mechanical systems, aircrafts, satellites and swarming robots.
Driven by its importance in both theoretical research and practical
applications, switched linear system has attracted considerable
attention during the last decade, see e.g.,
\cite{linhai,sunzd,GL,D2}.

Much work has been done on the controllability of switched linear
systems. For example, the controllability and reachability for
low-order switched linear systems have been presented in
(\cite{KJI}). 
Complete geometric criteria for controllability and
reachability were established in
\cite{sunzd} and \cite{GL}. 

Up to now, all the previous work mentioned above has been based on
the traditional controllability concept of switched linear systems.
In this paper, we investigate the structural controllability of
a class of uncertain switched linear system, where the parameters of
subsystems' state matrices are either unknown or zero. This is a reasonable assumption as many system parameters are difficult to identify and only known to certain approximations. On the other
hand, we are usually pretty sure where zero elements are either by
coordination transformation or by the absence of physical connections among
components in the system. For example, in multi-agent systems, usually
only whether there is communication link between any two agents is
known, but the communication weights of linkages can not be
measured exactly.
Thus structural properties that are independent of a specific value
of unknown parameters, e.g., the structural controllability studied
here, are of particular interest.
A switched linear system is
said to be structurally controllable if one can find a set of values
for the unknown parameters such that the corresponding switched linear
system is controllable in the classical sense. For linear structured
systems, generic properties including structural controllability
have been studied extensively and it turns out that generic
properties including structural controllability are true for almost
all values of the parameters, see e.g., (\cite{lin,hm,RJ,KL,DCW,JW,KM,K,LD}).
This also holds true
for switched linear systems studied here and presents one of the
reasons why this kind of structural controllability is of interest. 


It turns out that the structural controllability of switched linear systems only depends on
graphic topologies among state and input vertices of individual subsystems and their union. The paper aims to characterize such a relationship, and its contribution is twofold. First, two kinds of graphic representations of switched linear systems are proposed. Second, graph theory based necessary and sufficient characterizations of the structural controllability for switched
linear systems are presented. Graphic conditions can help to understand how the graphic topologies
of dynamical systems influence the corresponding generic properties,
here especially for the structural controllability. This would be
helpful in many practical applications 
and motivates our pursuit on illuminating the structural
controllability of switched linear systems from a graph theoretical
point of view. Preliminary results of this paper appeared in
\cite{LLC2}.



The organization of this paper is as follows: In Section 2, we
introduce some basic preliminaries and the problem formulation,
followed by structural controllability study of switched linear
systems in Section 3, where several graphic necessary and
sufficient conditions for the structural controllability are given.
Illustrative examples together with discussions on a more general
case are also presented. Finally, some concluding remarks are drawn
in Section 4.


\section{Preliminaries and Problem Formulation}
\subsection{Graph Theory Preliminaries}


A matrix $P$ is said to be a structured matrix if its entries are either
fixed zeros or independent free parameters. $\tilde{P}$ is called
admissible (with respect to $P$) if it can be obtained by fixing the
free parameters of $P$ at some particular values. In addition
$P_{ij}$ is adopted to represent the element of $P$ from row $i$ and
column $j$.


Consider a linear control system:
\begin{equation}\label{l1}
\dot{x}=Ax(t)+Bu(t),
\end{equation}
where $x(t)\in \mathbb{R}^n$ and $u(t)\in \mathbb{R}^r$. The
matrices $A$ and $B$ are assumed to be structured matrices, which means that their
elements are either fixed zeros or free parameters. This structured
system given by matrix pair $(A, B)$ can be described by a directed
graph (\cite{lin}).

The representation graph of structured system $(A, B)$ is a directed graph
$\mathcal{G}$, with vertex set $\mathcal{V}=\mathcal{X}\cup
\mathcal{U}$, where $\mathcal{X}=\{x_1,x_2,\ldots,x_n\}$, which is
called $state~vertex~ set$ and $\mathcal{U}=\{u_1,u_2,\dots,u_r\}$,
which is called $input~vertex~ set$, and edge set
$\mathcal{I}=\mathcal{I}_{UX}\cup\mathcal{I}_{XX}$, where
$\mathcal{I}_{UX}=\{(u_i,x_j)|B_{ji}\neq 0, 1\leq i \leq r, 1\leq
j\leq n\}$ and $\mathcal{I}_{XX}=\{(x_i,x_j)|A_{ji}\neq 0, 1\leq i
\leq n, 1\leq j\leq n\}$ are the oriented edges between inputs and
states and between states defined by the interconnection matrices
$A$ and $B$ above. This directed graph (for notational simplicity,
we will use digraph to refer to directed graph) $\mathcal{G}$ is
also called the graph of matrix pair $(A,B)$ and denoted by
$\mathcal{G}(A,B)$. The following notations from \cite{lin} are recalled.



\begin{definition}\label{def14}(\textit{Stem})
An alternating sequence of distinct vertices and oriented edges is
called a directed path, in which the terminal node of any edge never
coincide to its initial node or the initial or the terminal nodes of
the former edges. A stem is a directed path in the state vertex set
$\mathcal{X}$, that begins in the input vertex set $\mathcal{U}$.
\end{definition}

\begin{definition}(\textit{Accessibility}) A vertex (other than the input vertices) is called
{\em nonaccessible} if and only if there is no possibility of reaching
this vertex through any stem of the graph $\mathcal{G}$.
\end{definition}
\begin{definition}\label{def17}(\textit{Dilation})
Consider one vertex set $S$ formed by the vertices from the state
vertices set $\mathcal{X}$ and determine another vertex set $T(S)$,
which contains all the vertices $v$ with the property that there
exists an oriented edge from $v$ to one vertex in $S$. Then the
graph $\mathcal{G}$ contains a `{\em dilation}' if and only if there
exist at least a set $S$ of $k$ vertices in the vertex set of the
graph such that there are no more than $k-1$ vertices in $T(S)$.
\end{definition}

\subsection{Switched Linear System, Controllability and Structural Controllability}

In general, a switched linear system is composed of a family of
subsystems and a rule that governs the switching among them, and is
mathematically described by
\begin{eqnarray}\label{eq2}
\dot x(t)=&A_{\sigma(t)} x(t)+B_{\sigma(t)} u(t) ,
\end{eqnarray}
where $x(t)\in \mathbb{R}^n$ are the states, $u(t)\in
\mathbb{R}^{r}$ are piecewise continuous input,
$\sigma:[0,\infty)\rightarrow M \triangleq \{1,\ldots,m\}$ is the switching signal. System
(\ref{eq2}) contains $m$ subsystems $(A_i,B_i),$ $i\in
\{1,\ldots,m\}$ and $\sigma(t)$= $i$ implies that the
$i$th subsystem $(A_i,B_i)$ is active at time instance $t$.

In the sequel, the following definition of controllability of system
(\ref{eq2}) will be adopted (\cite{sunzd}):

\begin{definition}\label{def1} Switched linear system (\ref{eq2}) is said to be (completely) controllable
if for any initial state $x_0$ and final state $x_f$, there exist a
time instance $t_f>0$, a switching signal $\sigma:[0,t_f)\rightarrow
M$ and an input $u:[0,t_f)\rightarrow \mathbb{R}^{r}$ such that
$x(0)=x_0$ and $x(t_f)=x_f$.
\end{definition}


For the controllability of switched linear systems, a matrix rank condition was given in \cite{sunzd}.

\begin{lemma}\label{lem1}\rm If the matrix:\begin{equation}\begin{split}\label{eq3} &[B_1,B_2,\ldots,B_m,
A_1B_1,A_2B_1,\ldots,A_mB_1,A_1B_2,A_2B_2,\ldots,A_mB_2,\ldots,
A_1B_m,A_2B_m,\\&\ldots,A_mB_m,A_1^2B_1,A_2A_1B_1,\ldots,A_mA_1B_1,A_1A_2B_1,A_2^2B_1,\ldots,A_mA_2B_1,\ldots,A_1A_mB_m\\&,A_2A_mB_m,\ldots,A_m^2B_m,
\\&,\ldots,\\&A_1^{n-1}B_1,A_2A_1^{n-2}B_1,\ldots,A_mA_1^{n-2}B_1,A_1A_2A_1^{n-3}B_1,A_2^2A_1^{n-3}B_1,\ldots,A_mA_2A_1^{n-3}B_1\ldots,\\&A_1A_m^{n-2}B_m,
A_2A_m^{n-2}B_m\ldots, A_m^{n-1}B_m] \end{split}\end{equation}has
full row rank $n$, then switched linear system (\ref{eq2}) is
controllable, and vice versa.
\end{lemma}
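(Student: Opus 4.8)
The plan is to identify the controllable subspace of the switched system with a purely algebraic object, namely the smallest subspace $\mathcal{R}\subseteq\mathbb{R}^n$ that contains the image of every $B_i$ and is invariant under every $A_i$, and then to show that $\mathcal{R}$ is exactly the column span of the matrix in (\ref{eq3}). Controllability in the sense of Definition \ref{def1} is equivalent to reachability of every point of $\mathbb{R}^n$ from the origin, since the dynamics are linear in each mode and the freedom in choosing the switching signal and input lets one reverse trajectories; so it suffices to prove that the reachable set equals $\mathcal{R}$ and that $\mathcal{R}=\mathbb{R}^n$ iff (\ref{eq3}) has full row rank $n$.

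First I would prove that every reachable state lies in $\mathcal{R}$. Fix an admissible switching signal with switching instants $0=\tau_0<\tau_1<\cdots<\tau_N=t_f$, and let subsystem $i_\ell$ be active on $[\tau_{\ell-1},\tau_\ell)$. By the variation-of-constants formula the state at $t_f$ is a composition of flows $e^{A_{i_\ell}(\cdot)}$ together with forcing terms $\int e^{A_{i_\ell}(\tau_\ell-s)}B_{i_\ell}u(s)\,ds$. Expanding each matrix exponential as a power series and using that $\mathcal{R}$ is invariant under every $A_i$, hence under every $e^{A_i t}$, and contains every $\mathrm{Im}\,B_i$, shows the state stays in $\mathcal{R}$. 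For the converse inclusion I would show each generating vector $A_{i_1}\cdots A_{i_k}B_{i_0}e_j$ of $\mathcal{R}$ is reachable: choosing short dwell times on the active subsystems and reading off the coefficients in the analytic expansion of the exponentials produces exactly these products, and since the reachable set is a subspace this yields all of $\mathcal{R}$.

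Second I would cut the a~priori infinite family of generators down to products of length at most $n-1$. Set $\mathcal{V}_0=\sum_i\mathrm{Im}\,B_i$ and $\mathcal{V}_{k+1}=\mathcal{V}_k+\sum_i A_i\mathcal{V}_k$, so that $\mathcal{V}_k$ is spanned by all products $A_{i_1}\cdots A_{i_k}B_{i_0}$ of length at most $k$ and $\mathcal{V}_0\subseteq\mathcal{V}_1\subseteq\cdots$. These nested subspaces live in $\mathbb{R}^n$; once two consecutive terms coincide the chain is stationary, and because every strict inclusion raises the dimension by at least one while the dimension never exceeds $n$, we obtain $\mathcal{V}_{n-1}=\mathcal{V}_n=\cdots=\mathcal{R}$. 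The columns listed in (\ref{eq3}) are precisely a spanning set for $\mathcal{V}_{n-1}$, so $\mathcal{R}$ is their span and the system is controllable iff $\dim\mathcal{R}=n$, i.e.\ iff (\ref{eq3}) has full row rank $n$.

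The step I expect to be the main obstacle is the reverse inclusion in the reachability argument: proving that every vector of $\mathcal{R}$ is genuinely reachable by some switching signal and input, not merely that reachable states are contained in $\mathcal{R}$. This requires exploiting the freedom in the dwell times together with the linear independence of distinct exponential modes (a Vandermonde- or analyticity-type argument) and verifying that the reachable set really is a linear subspace. By comparison, the stabilization of the chain in the second step is a routine Cayley--Hamilton-style dimension count.
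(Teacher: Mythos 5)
The paper does not prove this lemma: it is imported verbatim from the cited reference \cite{ZS} (see also \cite{sunzd}), and Remark~1 merely records the consequence that the column span of (\ref{eq3}) is the controllable subspace $\langle A_1,\ldots,A_m\,|\,B_1,\ldots,B_m\rangle$. Your plan reconstructs exactly the argument of that reference: identify the reachable set with the smallest subspace containing every $\mathrm{Im}\,B_i$ and invariant under every $A_i$, then stabilize the chain $\mathcal{V}_0\subseteq\mathcal{V}_1\subseteq\cdots$ by a Cayley--Hamilton-style dimension count to cut the generators down to products of length at most $n-1$. That outline is sound, and you have correctly located the genuinely hard step, namely the reverse inclusion $\mathcal{R}\subseteq(\text{reachable set})$, which in \cite{sunzd} is handled by an induction on the number of switches together with the analyticity/Vandermonde argument you describe.

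One justification you give is not right as stated: controllability is not reduced to reachability by ``reversing trajectories,'' since time reversal of $\dot x=A_ix+B_iu$ yields a different system. The correct reduction is that under any fixed switching signal the set of states attainable from $x_0$ at time $t_f$ is $\Phi(t_f,0)x_0+R_\sigma$ with $R_\sigma$ a subspace of $\mathcal{R}$ and $\Phi(t_f,0)$ invertible, so complete controllability forces $\mathcal{R}=\mathbb{R}^n$, and conversely $\mathcal{R}=\mathbb{R}^n$ gives a switching signal with $R_\sigma=\mathbb{R}^n$. This is a local repair, not a structural flaw; with it, your proposal is the standard proof of the cited result.
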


\begin{remark}\rm This matrix is called controllability matrix of switched linear system (\ref{eq2}) and denoted as $\mathcal{C}$$(A_1,\ldots,A_m,B_1,\ldots,B_m)$. If we use $Im$ $P$ to represent the range space of
arbitrary matrix $P$, then $Im$ $\mathcal{C}$$(A_1,\ldots,A_m,B_1,\ldots,B_m)$ is the
controllable subspace of switched linear system (\ref{eq2})(\cite{sunzd}). The above lemma implies that system
(\ref{eq2}) is controllable if and only if
$Im$ $\mathcal{C}$$(A_1,\ldots,A_m,B_1,\ldots,B_m)=\mathcal{R}^n$.
Besides, controllable subspace can be expressed as $\langle
A_1,\ldots,A_m$ $|B_1,\ldots,B_m \rangle$, which is the smallest
subspace containing $Im$$B_i$, $i=1,\ldots,m$ and invariant under
the transformations $A_1,\ldots,A_m$ (\cite{D2}).
\end{remark}

In view of structural controllability, system (\ref{eq2}) will be
treated as structured switched linear system defined as:

\begin{definition}For structured system (\ref{eq2}), elements of all the
matrices $(A_1, B_1,$ $\ldots,$ $ A_m, B_m)$ are either fixed zero or
free parameters and free parameters in different subsystems
$(A_i,B_i), i\in M$ are independent. A numerically given matrices
set $(\tilde{A}_1, \tilde{B}_1,\ldots, \tilde{A}_m, \tilde{B}_m)$ is
called an admissible numerical realization (with respect to $(A_1,
B_1,\ldots, A_m, B_m)$) if it can be obtained by fixing all free
parameter entries of $(A_1, B_1,\ldots, A_m, B_m)$ at some
particular values.
\end{definition}

Similar with the definition of structural controllability of linear
system in \cite{K}, we have the following definition for structural
controllability of switched linear system (\ref{eq2}):

\begin{definition}\label{def2} Switched linear system (\ref{eq2}) given by its structured matrices $(A_1, B_1,\ldots,$ $A_m, B_m)$
is said to be structurally controllable if and only if there exists
at least one admissible realization $(\tilde{A}_1,
\tilde{B}_1,\ldots, \tilde{A}_m, \tilde{B}_m)$ such that the
corresponding switched linear system is controllable in the usual
numerical sense.
\end{definition}
\begin{remark}\rm
It turns out that once a structured system is controllable for one
choice of system parameters, it is controllable for almost all
system parameters, in which case the structured system then will be
said to be structurally controllable (\cite{lin}, \cite{DCW}).
\end{remark}
 Before proceeding further, we need to introduce the
definition of $g$-rank:
\begin{definition} The generic rank ($g$-rank) of a
structured matrix $P$ is defined to be the maximal rank that $P$
achieves as a function of its free parameters.
\end{definition}

Then, we have the following algebraic condition for structural
controllability:

\begin{lemma}\label{lem2} \rm Switched linear system
(\ref{eq2}) is structurally controllable if and only if $g$-rank
$\mathcal{C}$$(A_1,\ldots,A_m,B_1,\ldots,B_m)$ = $n$.
\end{lemma}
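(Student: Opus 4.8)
The plan is to deduce the statement directly from Lemma~\ref{lem1} together with the definition of structural controllability, the only real care being the correct reading of the $g$-rank of the composite matrix $\mathcal{C}$. Throughout I use that $\mathcal{C}(A_1,\ldots,A_m,B_1,\ldots,B_m)$ has exactly $n$ rows, so $\mathrm{rank}\,\mathcal{C}\le n$ for every assignment of parameters; thus ``full row rank $n$'' and ``$g$-rank equal to $n$'' are the numerical and generic counterparts of one another.

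First I would unwind Definition~\ref{def2}: the system is structurally controllable exactly when there is an admissible realization $(\tilde{A}_1,\tilde{B}_1,\ldots,\tilde{A}_m,\tilde{B}_m)$ whose associated switched linear system is controllable in the usual sense. Applying Lemma~\ref{lem1} to that fixed numerical realization, controllability is equivalent to $\mathcal{C}(\tilde{A}_1,\ldots,\tilde{A}_m,\tilde{B}_1,\ldots,\tilde{B}_m)$ having full row rank $n$. Hence structural controllability is equivalent to the existence of at least one assignment of the free parameters of $(A_1,B_1,\ldots,A_m,B_m)$ for which the controllability matrix attains rank $n$.

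It then remains to identify this existential condition with the statement $g\text{-rank}\,\mathcal{C}=n$. Here is the point requiring care: the entries of $\mathcal{C}$ are not themselves independent free parameters but polynomials in the free parameters of the $A_i$ and $B_i$, since $\mathcal{C}$ is assembled from sums of products of these structured matrices. Consequently the $g$-rank of $\mathcal{C}$ must be understood as the maximal rank of the matrix-valued polynomial map $(A_i,B_i)\mapsto\mathcal{C}$ taken over all admissible realizations, and not as the generic rank one would obtain by treating the individual entries of $\mathcal{C}$ as free. With this reading, $g\text{-rank}\,\mathcal{C}$ is precisely the maximum of $\mathrm{rank}\,\mathcal{C}(\tilde{A}_1,\ldots,\tilde{B}_m)$ over all admissible realizations, which is exactly the quantity appearing in the reformulation of the previous paragraph.

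Combining the pieces closes the argument in both directions. If $g\text{-rank}\,\mathcal{C}=n$, then some admissible realization achieves rank $n$ and is therefore controllable by Lemma~\ref{lem1}, so the system is structurally controllable; conversely, if the system is structurally controllable, some realization yields $\mathrm{rank}\,\mathcal{C}=n$, forcing the maximum, and hence the $g$-rank, to be at least $n$, while the row bound $\mathrm{rank}\,\mathcal{C}\le n$ forces equality. I expect the only genuine obstacle to be making the interpretation of $g\text{-rank}\,\mathcal{C}$ in the preceding paragraph precise and consistent with the definition of the $g$-rank, so that maximizing over the free parameters of the subsystem pairs coincides with ranging over all admissible realizations; once that identification is fixed, the equivalence follows immediately from the rank bound.
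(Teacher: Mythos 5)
Your argument is correct and is exactly the routine unwinding that the paper relies on: it states Lemma~\ref{lem2} without proof, treating it as an immediate consequence of Lemma~\ref{lem1}, Definition~\ref{def2}, and the definition of $g$-rank. Your added care in reading $g$-rank $\mathcal{C}$ as the maximal rank over admissible realizations of the subsystem parameters (rather than treating the entries of $\mathcal{C}$ as independent) is the right interpretation and the only point worth spelling out.
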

\section{Structural Controllability of Switched Linear Systems}

\subsection{Criteria Based on Union Graph}

For switched linear system (\ref{eq2}), digraph $\mathcal
{G}_i(A_i,B_i)$ with vertex set $\mathcal{V}_i$ and edge set
$\mathcal{I}_i$ can be adopted as the representation graph of its
subsystems $(A_i,B_i)$, $i \in \{1,\ldots,m\}.$ Switched linear system (\ref{eq2}) can be represented by a union graph $\mathcal{G}$ (actually a digraph) of these digraphes $\mathcal
{G}_i(A_i,B_i)$.


\begin{definition}\label{def3}  Given a collection of digraphes $\mathcal{G}_i = \{\mathcal{V}_i, \mathcal{I}_i\}$, their union graph is
\begin{eqnarray}
\mathcal{G}_1\cup \mathcal{G}_2\cup\ldots\cup
\mathcal{G}_m=\{\mathcal{V}_1\cup \mathcal{V}_2\cup \ldots \cup
\mathcal{V}_m;\mathcal{I}_1\cup \mathcal{I}_2\cup \ldots\cup
\mathcal{I}_m\}.
\end{eqnarray}
\end{definition}

\begin{remark}\label{rem3} \rm It turns out that union
graph $\mathcal{G}$ is the representation graph of linear structured
system: $(A_1+A_2+\ldots+A_m,B_1+B_2+\ldots+B_m)$. The reason is as
follows: If the element at position $a_{ji}(b_{ji})$ in matrix
$[A_1+A_2+\ldots+A_m, B_1+B_2+\ldots+B_m]$ is a free parameter, this
implies that there exist some matrices $[A_p, B_p]$, $p=1,\ldots,m$
such that the element at position $a_{ji}(b_{ji})$ is also a free
parameter and in the corresponding subgraph $\mathcal{G}_p$, there
is an edge from vertex $i$ to vertex $j$. According to the
definition of union graph, it follows that there is also an edge
from vertex $i$ to vertex $j$ in union graph $\mathcal{G}$. If the
element at position $a_{ji}(b_{ji})$ in $[A_1+A_2+\ldots+A_m,
B_1+B_2+\ldots+A_m]$ is zero, this implies that for every matrices
$[A_p, B_p]$, $p=1,\ldots,m$, the element at position
$a_{ji}(b_{ji})$ is zero and in the corresponding subgraph
$\mathcal{G}_p$, there is no edge from vertex $i$ to vertex $j$. It
follows that there is also no edge in union graph $\mathcal{G}$ from
vertex $i$ to vertex $j$.
\end{remark}


\begin{definition}\label{def5}(\cite{lin}) The matrix pair $(A,B)$ is said to be
reducible or of form I if there exists a permutation matrix $P$ such
that they can be written in the following form:
$
PAP^{-1}=\left[
\begin{array}{ccc}
A_{11}&0\\
A_{21}&A_{22}\\
\end{array}\right],PB=\left[
\begin{array}{cc}
0\\
B_{22}\\
\end{array}\right],
$where $ A_{11}\in \mathbb{R}^{p \times p}$ , $A_{21 } \in
\mathbb{R}^{(n - p) \times p}$,$ $ $  A_{22}  \in \mathbb{R}^{(n -
p) \times (n- p)} $ and $ B_{22}  \in \mathbb{R}^{(n- p) \times r}$.
\end{definition}
\begin{remark}\label{rem4} \rm Whenever the matrix pair $(A,B)$ is of form I, the system is
structurally uncontrollable (\cite{lin}) and meanwhile, the
controllability matrix
$C\triangleq\left[B,AB,\ldots,A^{n-1}B\right]$ will have at least
one row which is identically zero for all parameter values
(\cite{KL}). If there is no such permutation matrix $P$, we say that
the matrix pair $(A,B)$ is irreducible.
\end{remark}

\begin{definition}\label{def6} (\cite{lin}) The matrix pair $(A,B)$ is said to be of
form II if there exists a permutation matrix $P$ such that they can
be written in the following form:
$
\left[PAP^{-1},PB\right]=\left[
\begin{array}{ccc}
P_1\\
P_2
\end{array}\right],
$ where $P_2\in \mathbb{R}^{(n-k)\times (n+r)}$ , $P_1 \in
\mathbb{R}^{k \times (n+r)}$ with no more than $k-1$ nonzero columns
(all the other columns of $P_1$ have only fixed zero entries).
\end{definition}

The following lemma characterizes the structural controllability for linear system $(A,B)$ (\cite{lin,K}):
\begin{lemma}\label{lem3}\rm (\cite{lin,K}) For linear structured system (\ref{l1}), the
following statements are equivalent:
\begin{enumerate}
\item[a)] the pair $(A,B)$ is structurally controllable;
\item[b)]i) $[A,B]$ is irreducible or not of form I,\\
         ii) $[A,B]$ has $g$-rank$[A,B]=n$ or is not of form II;
\item[c)]i) there is no nonaccessible vertex in $\mathcal{G}(A,B)$,\\
         ii) there is no `dilation' in $\mathcal{G}(A,B)$.
\end{enumerate}
\end{lemma}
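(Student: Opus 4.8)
The plan is to reduce all three statements to the generic rank of the single-system controllability matrix $C=\left[B,AB,\ldots,A^{n-1}B\right]$ and then to translate the two algebraic obstructions of part (b) into the two graphic obstructions of part (c). The starting observation, exactly as in Lemma \ref{lem2} but for a single pair, is that $(A,B)$ is structurally controllable if and only if $g$-rank\,$C=n$: if some admissible realization is controllable then $\operatorname{rank}C=n$ there, so $g$-rank\,$C\ge n$, while the reverse inequality is automatic since $C$ has only $n$ rows; conversely $g$-rank\,$C=n$ produces, by definition of generic rank, a controllable admissible realization. With this in hand it suffices to establish the two equivalences (b)$\Leftrightarrow$(c) and (b)$\Leftrightarrow$(a).

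For (b)$\Leftrightarrow$(c) I would treat the two clauses separately. For the first, I would show that $(A,B)$ is of form I (Definition \ref{def5}) if and only if $\mathcal{G}(A,B)$ has a nonaccessible vertex: a permutation into the block form of Definition \ref{def5} exhibits the state vertices of the $A_{11}$ block as receiving no edge from any input vertex or from any vertex outside the block, so no stem reaches them; conversely, collecting all nonaccessible state vertices and permuting them to the top recovers precisely that block structure, since by definition these vertices receive edges only from one another. For the second clause I would use the classical term-rank fact: form the bipartite graph whose left vertices are the $n$ rows of $[A,B]$ and whose right vertices are its $n+r$ columns, with an edge wherever the entry is a free parameter; then $g$-rank\,$[A,B]$ equals the size of a maximum matching. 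By the König--Hall theorem a matching saturating all $n$ rows exists, i.e. $g$-rank\,$[A,B]=n$, if and only if every set $S$ of rows satisfies $|N(S)|\ge|S|$, and the failure of this Hall condition is exactly the existence of a set $S$ of $k$ state vertices with $|T(S)|\le k-1$, i.e. a dilation (Definition \ref{def17}), which in matrix language is form II (Definition \ref{def6}). Hence ``$g$-rank\,$[A,B]=n$, not of form II'' $\Leftrightarrow$ ``no dilation.''

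For (b)$\Leftrightarrow$(a) I would argue necessity and sufficiency. Necessity (not (b) $\Rightarrow$ not (a)) splits along the two obstructions: if $(A,B)$ is of form I then by Remark \ref{rem4} the matrix $C$ has an identically zero row, so $g$-rank\,$C<n$; and if $g$-rank\,$[A,B]<n$ then, since every column of $C$ lies in the column space of $[A\ B]$ for each admissible realization, one has $\operatorname{rank}C\le\operatorname{rank}[A,B]$ pointwise and therefore $g$-rank\,$C\le g$-rank\,$[A,B]<n$; either way $(A,B)$ is not structurally controllable. Sufficiency ((b)$\Rightarrow$(a)) is the substantive step and the main obstacle. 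Here I would invoke Lin's cactus construction: the absence of a dilation provides, through the saturating matching of the König--Hall argument, a system of vertex-disjoint paths and cycles covering the state vertices, and the absence of nonaccessible vertices lets each component be attached to an input, so that $\mathcal{G}(A,B)$ contains a spanning cactus. Assigning the free parameters along this cactus generic values, I would then show that a suitable $n\times n$ minor of $C$ has determinant equal to a nonzero polynomial in the parameters, because the cactus furnishes one monomial in the Leibniz expansion that cannot be cancelled by any other term; this gives $g$-rank\,$C=n$. Proving that ``no dilation'' together with ``no nonaccessible vertex'' forces a spanning cactus, and that its associated monomial survives without cancellation, is where the real work lies and is the part I expect to be the most delicate.
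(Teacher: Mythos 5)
The paper does not actually prove Lemma~\ref{lem3}; it is imported from the classical literature (Lin, Shields--Pearson, Glover--Silverman, Mayeda, Reinschke) and used as a black box, so there is no in-paper argument to compare yours against. Judged on its own terms, most of your proposal is sound and follows the standard route. The reduction of structural controllability to $g$-rank of $C=[B,AB,\ldots,A^{n-1}B]$ is correct; the identification of form I with the existence of a nonaccessible vertex (in both directions, by collecting the nonaccessible vertices into the top block and noting they can only receive edges from one another) is the right argument; and the K\"onig--Hall translation of ``no dilation'' into ``$g$-rank$[A,B]=n$'', resting on the fact that the generic rank of a structured matrix with independent free entries equals its term rank, is exactly how Shields and Pearson handle form II. The necessity half of (b)$\Leftrightarrow$(a), via the identically zero row of $C$ from Remark~\ref{rem4} and via $\operatorname{rank}C\le\operatorname{rank}[A,B]$ holding pointwise over realizations, is also complete.

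The one genuine gap is the one you flag yourself: the sufficiency direction, i.e., that ``no nonaccessible vertex and no dilation'' forces $g$-rank\,$C=n$. Asserting that the saturating matching yields vertex-disjoint input-rooted paths and state cycles, that accessibility lets every cycle be budded onto an input-rooted path to produce a spanning cactus, and that the cactus contributes a monomial to some $n\times n$ minor of $C$ that survives cancellation in the Leibniz expansion is a faithful description of Lin's proof strategy, but each of these three claims needs a real argument --- especially the non-cancellation step, which is precisely where Lin's original single-input proof and its multi-input extensions by Shields--Pearson and Mayeda do their work. As written, your proposal establishes (a)$\Rightarrow$(b)$\Leftrightarrow$(c) but only gestures at (c)$\Rightarrow$(a); the lemma is not proved until that step is either carried out or explicitly delegated to the cited references, which is in effect what the paper itself does.
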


This lemma proposed interesting graphic conditions for structural
controllability of linear systems and revealed that the structural
controllability is totally determined by the underlying graph
topology. Next, we turn to the switched linear system (\ref{eq2})  and prove a graphic sufficient condition for its structural controllability.

%

\begin{theorem}\label{the1}\rm Switched linear system (\ref{eq2}) with graphic topologies
$\mathcal{G}_i$, $i\in \{1,\ldots,m\}$, is structurally controllable
if its union graph $\mathcal{G}$ satisfies:
\begin{enumerate}
\item[i)]there is no nonaccessible vertex in $\mathcal{G}$,
\item[ii)]there is no `dilation' in $\mathcal{G}$.
\end{enumerate}
\end{theorem}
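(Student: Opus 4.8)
The plan is to reduce the structural controllability of the switched system to that of a single linear structured system, namely the one whose representation graph is the union graph. First I would invoke Remark \ref{rem3}, which identifies the union graph $\mathcal{G}$ with the representation graph $\mathcal{G}(A_\Sigma,B_\Sigma)$ of the linear structured pair $(A_\Sigma,B_\Sigma)$, where $A_\Sigma=A_1+\cdots+A_m$ and $B_\Sigma=B_1+\cdots+B_m$. Since hypotheses i) and ii) are exactly the two graphic conditions c) of Lemma \ref{lem3} applied to $\mathcal{G}(A_\Sigma,B_\Sigma)$, that lemma yields that the single pair $(A_\Sigma,B_\Sigma)$ is structurally controllable. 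Hence there is an admissible realization $(\hat A_\Sigma,\hat B_\Sigma)$ with $\langle \hat A_\Sigma \,|\, \hat B_\Sigma\rangle=\mathbb{R}^n$.

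The second step is a lifting argument that turns this single controllable realization into a controllable realization of the switched system. Using the zero/free pattern described in Remark \ref{rem3}, I would write $\hat A_\Sigma=\sum_{i=1}^m \tilde A_i$ and $\hat B_\Sigma=\sum_{i=1}^m \tilde B_i$, where each $\tilde A_i$ (resp. $\tilde B_i$) is an admissible realization of $A_i$ (resp. $B_i$). This is always possible: at a position that is a fixed zero of $A_\Sigma$ every subsystem has a fixed zero there, while at a free position of $A_\Sigma$ at least one subsystem carries a free parameter, so I can load the entire entry of $\hat A_\Sigma$ onto that subsystem and assign the value zero to the corresponding (free) entries of the remaining subsystems.

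The third step connects the controllable subspaces. I would show that for any fixed numerical matrices, $\langle \hat A_\Sigma\,|\,\hat B_\Sigma\rangle\subseteq \langle \tilde A_1,\ldots,\tilde A_m\,|\,\tilde B_1,\ldots,\tilde B_m\rangle$. By the Remark following Lemma \ref{lem1}, the right-hand space is the smallest subspace $V$ containing each $\mathrm{Im}\,\tilde B_i$ and invariant under every $\tilde A_i$; it therefore contains $\mathrm{Im}\,\hat B_\Sigma=\mathrm{Im}\sum_i \tilde B_i$ and is invariant under $\hat A_\Sigma=\sum_i \tilde A_i$, so by minimality it contains $\langle \hat A_\Sigma\,|\,\hat B_\Sigma\rangle$. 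Since the latter equals $\mathbb{R}^n$, the switched controllable subspace is all of $\mathbb{R}^n$, i.e. $\mathrm{Im}\,\mathcal{C}(\tilde A_1,\ldots,\tilde A_m,\tilde B_1,\ldots,\tilde B_m)=\mathbb{R}^n$. By Definition \ref{def2} (or equivalently Lemma \ref{lem2}, since this forces the $g$-rank to be $n$), the switched system is then structurally controllable.

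The step I expect to be the main obstacle is making the lifting and the subspace inclusion fit together rigorously: in particular, justifying that an arbitrary admissible $(\hat A_\Sigma,\hat B_\Sigma)$ really decomposes into admissible subsystem realizations (which hinges on being allowed to fix a free parameter at zero) and that the controllable subspace of the sum system is contained in, but need not equal, the switched controllable subspace. This one-sided inclusion is also precisely why the result is only sufficient: the inclusion can be strict, so a switched system may still be structurally controllable even when its union graph violates i) or ii).
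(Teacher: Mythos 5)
Your proposal is correct, and its first half coincides exactly with the paper's proof: both invoke Remark~\ref{rem3} to identify the union graph with the structured pair $(A_1+\cdots+A_m,\,B_1+\cdots+B_m)$ and then apply Lemma~\ref{lem3} to conclude that this summed pair is structurally controllable. Where you genuinely diverge is in passing from a controllable realization of the summed pair to a controllable realization of the switched system. The paper works entirely at the level of controllability matrices: it expands each block $(A_1+\cdots+A_m)^k(B_1+\cdots+B_m)$ into a sum of products $A_{i_1}\cdots A_{i_k}B_j$, adjoins the individual summands as extra columns, and uses column operations to show that $\mathcal{C}(A_1,\ldots,A_m,B_1,\ldots,B_m)$ still has $n$ independent columns. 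You instead invoke the characterization of the switched controllable subspace as the smallest subspace containing each $\mathrm{Im}\,\tilde B_i$ and invariant under every $\tilde A_i$ (the Remark after Lemma~\ref{lem1}), and note that any such subspace contains $\langle \hat A_\Sigma \mid \hat B_\Sigma\rangle$; this is a cleaner, coordinate-free proof of the same one-sided inclusion, at the cost of relying on that cited characterization rather than being self-contained at the matrix level. Two further points in your favour: you make explicit the lifting step --- that an admissible realization of the summed pair decomposes into admissible subsystem realizations by loading each free entry onto one subsystem and fixing the remaining free parameters at zero --- which the paper passes over silently when it writes ``there exist some scalars for the free parameters in matrices $(A_i,B_i)$''; and your closing observation that the inclusion can be strict correctly identifies why the condition is only sufficient, in agreement with the paper's counterexample in Subsection~C.
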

\begin{pf}
Assume the two conditions in this theorem are satisfied. According
to Remark \ref{rem3} and Lemma \ref{lem3}, the corresponding linear
system $(A_1+A_2+\ldots +A_m, B_1+B_2+\ldots +B_m)$ is structurally
controllable. It follows that there exist some scalars for the free
parameters in matrices $(A_i, B_i),i=1,2,\ldots,m$ such that
controllability matrix
\begin{equation}\begin{split}
&[B_1+B_2+\ldots+B_m,(A_1+A_2+\ldots+A_m)( B_1+B_2+\ldots+B_m),\\&
(A_1+A_2+\ldots+A_m)^2(B_1+B_2+\ldots+B_m),~\ldots,\\&(A_1+A_2+\ldots+A_m)^{n-1}(
B_1+B_2+\ldots+B_m)]\nonumber\end{split}\end{equation} has full row
rank $n$. Expanding the matrix, it follows that matrix
\begin{equation}\begin{split}[&B_1+B_2+\ldots+B_m, A_1B_1+A_2B_1+\ldots+A_mB_1+A_1B_2+A_2B_2\\&
+\ldots+A_mB_2+\ldots
+A_1B_m+A_2B_m\ldots+A_mB_m,~\ldots,~\\&A_1^{n-1}B_1
+A_2A_1^{n-2}B_1+\ldots+A_m^{n-1}B_m]\nonumber\end{split}\end{equation}
has full rank $n$. \\The following matrix can be got after adding
some column vectors to the above matrix:
\begin{equation}\begin{split}
&[B_1+B_2+\ldots+B_m,
B_2,\ldots,B_m,A_1B_1+A_2B_1+\ldots+A_mB_1+A_1B_2+A_2B_2\\&+\ldots+A_mB_2+\ldots+
A_1B_m+A_2B_m+\ldots+A_mB_m,A_2B_1,\ldots,A_mB_m,~\ldots ,\\&
A_1^{n-1}B_1
+A_2A_1^{n-2}B_1+\ldots+A_1A_m^{n-2}B_1+\ldots+A_m^{n-1}B_m,
A_2A_1^{n-2}B_1,\ldots, \\&A_1A_m^{n-2}B_1,\ldots,
A_m^{n-1}B_m].\nonumber\end{split}\end{equation} Since this matrix
still has $n$ linear independent column vectors, it follows that it
has full row rank $n$. Next, subtracting $B_2,\ldots,B_m$ from $
B_1+B_2+\ldots+B_m$; subtracting $A_2B_1,\ldots,A_mB_m$ from $
A_1B_1+A_2B_1+\ldots+A_mB_1+\ldots+A_1B_m+\ldots+A_mB_m $ and
subtracting $A_2A_1^{n-2}B_1,\ldots,A_1A_m^{n-2}B_1,\ldots,
A_m^{n-1}B_m$ from $A_1^{n-1}B_1
+A_2A_1^{n-2}B_1+\ldots+A_1A_m^{n-2}B_1+\ldots+ A_m^{n-1}B_m$, we
can get the following matrix:\begin{equation}\begin{split}
&[B_1,B_2,\ldots,B_m,A_1B_1,A_2B_1,\ldots,A_mB_m,~\ldots,
\\&A_1^{n-1}B_1,A_2A_1^{n-2}B_1,\ldots,A_1A_m^{n-2}B_1,\ldots,
A_m^{n-1}B_m],\nonumber \end{split}\end{equation} which is the
controllability matrix for switched linear systems (\ref{eq2}).
Since column fundamental transformation does not change the matrix
rank, this matrix still has full row rank $n$. Hence, the switched
linear system (\ref{eq2}) is structurally controllable. 
\end{pf}

Actually, from the proof, we can see that full rank of
controllability matrix of linear system
$(A_1+A_2+\ldots+A_m,B_1+B_2+\ldots+B_m)$ in Remark \ref{rem3}
implies the full rank of controllability matrix of system
(\ref{eq2}), which means that the structural controllability of this
linear system implies structural controllability of system
(\ref{eq2}). It turns out that this criterion is not necessary for
system (\ref{eq2}) to be structurally controllable (see the example
in subsection 3.4). This implies that the union graph does not
contain enough information for determining structural
controllability. This is because edges from different subsystems are
not differentiated in union graph. In the following subsection,
another graphic representation of switched linear systems is
proposed, from which necessary and sufficient conditions for
structural controllability arise.

\subsection{Criteria Based on Colored Union Graph}

In the union graph, there is no distinction made between the edges from different subsystems. To  solve this issue, we introduce the following \textit{`colored union graph'} as another graphic representation of switched systems.
\begin{definition}\label{def4}
 Given a collection of digraphes $\mathcal{G}_i = \{\mathcal{V}_i, \mathcal{I}_i\}$, their colored union graph is $\mathcal{\tilde{G}}(\mathcal{\tilde{V}},\mathcal{\tilde{I}})$, where its vertex set $\mathcal{\tilde{V}}=\{\mathcal{V}_1\cup
\mathcal{V}_2\cup \ldots \cup \mathcal{V}_m$\} and edge set
$\mathcal{\tilde{I}}=\{e|e\in \mathcal{I}_i, i=1,2,\ldots,m\}$,
i.e., for $i\in \{1,\ldots, m\}$.

\end{definition}

Intuitively, each edge $e$ in the colored union graph $\mathcal{\tilde{G}}$ is associated
an index $i$ (color) to indicate that $e$ comes from the $i$th subsystem (subgraph $\mathcal{G}_i$). With this colored union graph, several graphic properties are
introduced in the following lemmas.
\begin{lemma}\label{lem4} \rm There is no nonaccessible vertex in the colored union graph
$\mathcal{\tilde{G}}$ of switched linear system (\ref{eq2}) if and
only if the matrix $ [A_1+A_2+\cdots+A_m, B_1+B_2+\cdots+B_m] $ is
irreducible or not of form I.
\end{lemma}

\begin{pf} One vertex is accessible if and only if it can be reached by a stem. From Definitions \ref{def3} and \ref{def4}, it follows that there is no
nonaccessible vertex in the colored union graph if and only if there
is no nonaccessible vertex in the union graph. Besides, from Remark
\ref{rem3}, it is clear that the matrix representation of the union
graph is $[A_1+A_2+\cdots+A_m, B_1+B_2+\cdots+B_m]. $ According to
Lemma \ref{lem3}, there is no nonaccessible vertex in the union
graph if and only if matrix is irreducible or not of form I.
Consequently the equivalence between accessibility of colored union
graph and irreducibility of this matrix gets proved.
\end{pf}

A new graphic property `$S$-$dilation$' in colored union graph is
introduced here:

\begin{definition}\label{def7} In the colored union graph
$\mathcal{\tilde{G}}$, which is composed of subgraphs
$\mathcal{G}_i$, $i=1,2,\ldots,m$, consider one vertex set $S$
formed by the vertices from the state vertex set $\mathcal{X}$ and
determine another vertex set $T(S)=\{v|v\in T_i(S),
i=1,2,\ldots,m\}$, where $T_i(S)$ is a vertex set in $\mathcal{G}_i$
which contains all the vertices $w$ with the property that there
exists an oriented edge from $w$ to one vertex in $S$. Then
$|T(S)|=\sum_{i=1}^m |T_i(S)|$. If $|T(S)|<|S|$, we say that there
is a $S$-$dilation$ in the colored union graph
$\mathcal{\tilde{G}}$.
\end{definition}

Based on this new graphic property, the following lemma can be
introduced:

\begin{lemma}\label{lem5} \rm There is $S$-$dilation$ in the colored union graph
$\mathcal{\tilde{G}}$ of switched linear system (\ref{eq2}) if and
only if matrix $[A_1,A_2,\ldots,A_m, B_1,B_2,\ldots,B_m]$ is of form
II. It means that this matrix can be written into:
$[A_1,A_2,\ldots,A_m, B_1,B_2,\ldots,B_m]$=$\left[
\begin{array}{ccc}
P_1\\
P_2
\end{array}\right]$,
 where $P_1 \in \mathbb{R}^{p \times k}$ with no more than $p-1$
nonzero columns (all the other columns of $P_1$ have only fixed zero
entries).
\end{lemma}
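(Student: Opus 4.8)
The plan is to build a dictionary between the combinatorial quantity $|T(S)|$ and the number of nonzero columns of the concatenated matrix $M\triangleq[A_1,A_2,\ldots,A_m,B_1,B_2,\ldots,B_m]$, and then to recognize the resulting column condition as Form II. First I would fix a vertex set $S\subseteq\mathcal{X}$ and unwind Definition \ref{def7} one subgraph at a time. By Definition \ref{def14} and the edge rules of Definition \ref{def5}'s underlying representation (Definition 2), a vertex $w$ (a state vertex $x_l$, hence a column of $A_i$, or an input vertex $u_l$, hence a column of $B_i$) carries an edge into some $x_j\in S$ within $\mathcal{G}_i$ exactly when the entry of $[A_i,B_i]$ in row $j$ and in the column indexed by $w$ is nonzero. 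Therefore $w\in T_i(S)$ if and only if the column of $[A_i,B_i]$ indexed by $w$ has a nonzero entry in at least one row belonging to $S$; equivalently, $|T_i(S)|$ equals the number of nonzero columns of the submatrix $[A_i,B_i]|_S$ obtained by keeping only the rows indexed by $S$.

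The key step is to sum this identity over $i$. Since $M$ is the horizontal concatenation of the blocks $A_1,\ldots,A_m,B_1,\ldots,B_m$, the columns coming from distinct blocks are genuinely distinct columns of $M$, so nothing is merged or double counted. Consequently $|T(S)|=\sum_{i=1}^m|T_i(S)|$ is exactly the number of nonzero columns of $M|_S$, the restriction of $M$ to the rows in $S$. This is precisely where the coloring is essential: in the plain union graph of Definition \ref{def3}, edges from different subsystems into the same target coalesce into a single edge, so the count would collapse; it is the per-subsystem bookkeeping of Definition \ref{def7}, made explicit by the formula $|T(S)|=\sum_{i=1}^m|T_i(S)|$, that faithfully matches keeping the blocks separate in $M$ rather than summing them as in Remark \ref{rem3}.

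With this dictionary the two implications follow directly. Write $p=|S|$ and $k=m(n+r)$ for the total number of columns of $M$. An $S$-dilation means there is an $S$ for which the number of nonzero columns of $M|_S$ is strictly less than $p$, hence at most $p-1$; choosing a row permutation $P$ that moves the $p$ rows of $S$ to the top exhibits $PM=\bigl[\begin{smallmatrix}P_1\\P_2\end{smallmatrix}\bigr]$ with $P_1\in\mathbb{R}^{p\times k}$ having no more than $p-1$ nonzero columns, which is exactly Form II as stated in Definition \ref{def6}. Conversely, if $M$ is of Form II, the rows forming the block $P_1$ determine a set $S$ of size $p$ whose restriction $M|_S$ has at most $p-1$ nonzero columns, so $|T(S)|\le p-1<|S|$, i.e.\ an $S$-dilation.

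I expect the only delicate point to be the multiplicity bookkeeping in the middle step: one must verify that $T(S)$ is counted with multiplicity across the $m$ subgraphs and that this is the correct graph-theoretic counterpart of counting nonzero columns in the concatenation $M$, not in a single summed matrix such as $A_1+\cdots+A_m$. Once that correspondence is pinned down, the permutation/row-selection argument translating the column condition into Form II is routine.
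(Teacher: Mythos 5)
Your proposal is correct and follows essentially the same route as the paper: both rest on the identification of $|T_i(S)|$ with the number of nonzero columns of $[A_i,B_i]$ restricted to the rows in $S$, the observation that the multiplicity count $|T(S)|=\sum_{i=1}^m|T_i(S)|$ matches counting nonzero columns in the \emph{concatenated} matrix rather than a summed one, and the translation of ``at most $|S|-1$ nonzero columns in the rows of $S$'' into form II via a row permutation. Your write-up is merely more explicit than the paper's, which delegates the dictionary between dilations and form II to the cited linear-systems literature.
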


\begin{pf}From \cite{lin} and \cite{hm} or Lemma \ref{lem3}, it is
known that in linear systems, there is no `dilation' in the
corresponding graph if and only if the matrix pair $[A,B]$ can not
be of form II or have $g$-rank = $n$. From the explanation of this
result in \cite{lin} and Definition \ref{def6}, $P_1$ in $[A,B]$ has
$p$ rows, which actually represents the $p$ vertices of vertex set
$S$ (defined for dilation), and each nonzero element of each row of
$P_1$ represents that there is one vertex pointing to the vertex
presented by this row. Therefore, the number of nonzero columns in
$P_1$ is the number of vertices pointing to some vertex in $S$, and
actually equals to $|T(S)|$. Furthermore, by the definition of
$S$-$dilation$, $|T(S)|$ is now the summation of $|T_i(S)|$, $i\in
\{1,\ldots,m\}$, in every subgraph. It follows that there is
$S$-$dilation$ in $\mathcal{\tilde{G}}$ if and only if matrix
$[A_1,A_2,\ldots,A_m, B_1,B_2,\ldots,B_m]$ is of form II.
\end{pf}

Before going further to give another algebraic explanation of
$S$-$dilation$, one definition and lemma proposed in \cite{RJ} must
be introduced first:

\begin{definition}\label{def8}(\cite{RJ}) A structured
$n\times m^\prime$ $(n\leq m^\prime)$ matrix $A$ is of form $(t)$
for some $t$, $1\leq t\leq n$, if for some $k$ in the range
$m^\prime-t<k\leq m^\prime$, $A$ contains a zero submatrix of order
$(n+m^\prime-t-k+1)\times k$.
\end{definition}

\begin{lemma}\label{lem6}\rm (\cite{RJ})~~ $g$-rank of $A=t$
\begin{enumerate}
\item[i)] for $t=n$ if and only if $A$ is not of form $(n)$;
\item[ii)]for $1\leq t< n$ if and only if $A$ is of form $(t+1)$ but
not of form $(t)$.
\end{enumerate}
\end{lemma}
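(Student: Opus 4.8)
The plan is to reduce the statement to two classical combinatorial facts: that the generic rank of a structured matrix equals its \emph{term rank} $\tau(A)$ (the largest number of nonzero entries no two of which share a row or a column), and the K\"onig--Egerv\'ary duality relating matchings to coverings in the bipartite graph of the nonzero pattern. First I would establish that $g$-rank$\,A=\tau(A)$. For any $t\times t$ submatrix, its determinant is a polynomial in the free parameters that is not identically zero precisely when the submatrix admits a nonzero generalized diagonal, i.e. a set of $t$ nonzero entries in distinct rows and columns; since a nonzero polynomial vanishes only on a proper algebraic subset, such a minor is nonzero for generic parameter values, whereas if every $t\times t$ minor vanishes identically the rank stays below $t$ for \emph{all} values. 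Hence the maximal achievable rank is exactly $\tau(A)$.

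Next I would convert term rank into a statement about zero submatrices. By the K\"onig--Egerv\'ary theorem, $\tau(A)$ equals the minimum number of lines (rows together with columns) needed to cover all nonzero entries of $A$. If a minimum cover uses $a$ rows and $b$ columns with $a+b=\tau(A)$, then the remaining $n-a$ rows and $m^\prime-b$ columns meet only in zero entries, giving a zero submatrix of shape $(n-a)\times(m^\prime-b)$ whose row-plus-column count is $n+m^\prime-\tau(A)$; conversely any zero submatrix of shape $r\times c$ yields a cover of size $(n-r)+(m^\prime-c)\ge\tau(A)$, so $r+c\le n+m^\prime-\tau(A)$. Therefore the maximum of $r+c$ over all zero submatrices of $A$ is exactly $n+m^\prime-\tau(A)$.

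Then I would unwind Definition~\ref{def8}. Writing a form-$(t)$ submatrix as $(n+m^\prime-t-k+1)\times k$, its row-plus-column count is identically $n+m^\prime-t+1$, and the stated range $m^\prime-t<k\le m^\prime$ is precisely the set of column counts $k$ for which this shape is a genuine submatrix (at least one and at most $n$ rows, and at most $m^\prime$ columns). Since every sub-submatrix of a zero submatrix is again zero, $A$ is of form $(t)$ if and only if it possesses \emph{some} zero submatrix with $r+c\ge n+m^\prime-t+1$, which by the previous paragraph is equivalent to $n+m^\prime-\tau(A)\ge n+m^\prime-t+1$, i.e. $\tau(A)\le t-1$, i.e. $g$-rank$\,A<t$. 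Both conclusions then read off immediately: for (i), as $\tau(A)\le n$ always, $g$-rank$\,A=n$ iff $\tau(A)$ is not $<n$ iff $A$ is not of form $(n)$; for (ii), $g$-rank$\,A=t$ with $1\le t<n$ holds iff simultaneously $\tau(A)\le t$ (form $(t+1)$) and $\tau(A)\ge t$ (not form $(t)$).

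The step I expect to be the main obstacle is the bookkeeping in the third paragraph. I must verify that the index range $m^\prime-t<k\le m^\prime$ in Definition~\ref{def8} exactly parametrizes the admissible shapes summing to $n+m^\prime-t+1$, and that when $A$ carries a \emph{larger} zero submatrix (with $r+c>n+m^\prime-t+1$) one can delete lines to land on a zero submatrix of this exact total while keeping its column count inside the range, since $t\le n$ forces the resulting $c$ to satisfy $c\ge m^\prime-t+1$ automatically. Getting this threshold accounting right is what guarantees that ``form $(t)$'' corresponds cleanly to $\tau(A)\le t-1$ rather than being off by one at the boundary, and everything else is routine.
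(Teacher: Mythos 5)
Your proposal is correct, but note that the paper offers no proof of Lemma~\ref{lem6} to compare against: it is imported verbatim from Shields and Pearson \cite{RJ}, so the only in-paper ``proof'' is the citation. Your argument --- generic rank equals term rank because a $t\times t$ minor is a not-identically-zero polynomial exactly when the submatrix carries a nonzero generalized diagonal (no cancellation, since the entries are independent indeterminates), followed by K\"onig--Egerv\'ary duality to translate term rank $\tau(A)\le t-1$ into the existence of a zero submatrix with $r+c\ge n+m^\prime-t+1$, and finally the bookkeeping showing that Definition~\ref{def8}'s range $m^\prime-t<k\le m^\prime$ parametrizes exactly those shapes --- is the classical route and is essentially the one used in the cited source. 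The boundary step you flagged is handled correctly: from $r\le n$ and $r+c\ge n+m^\prime-t+1$ you get $c\ge m^\prime-t+1$, so trimming a larger zero submatrix always lands inside the admissible range of $k$, and in the converse direction $t\le n$ guarantees the complementary block of a minimum line cover is nonempty. The only cosmetic caveat is that your blanket claim ``the maximum of $r+c$ over all zero submatrices equals $n+m^\prime-\tau(A)$'' implicitly allows degenerate ($0$-row or $0$-column) submatrices when $\tau(A)=n$ and $a=n$; this never affects the lemma because the equivalence is only invoked for thresholds $t\le n$, where the relevant submatrices are genuine, but it is worth stating the duality as an inequality in the direction you actually use.
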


From the above definition and lemma, another lemma is proposed here:

\begin{lemma}\label{lem7}\rm There is no $S$-$dilation$ in the colored union graph
$\mathcal{\tilde{G}}$ of switched linear system (\ref{eq2}) if and
only if the following matrix $[A_1,A_2,\ldots,A_m,
B_1,B_2,\ldots,B_m]$ has g-rank $n$.
\end{lemma}

\begin{pf}
\textit{~~Necessity:} If this matrix has $g$-rank $< n$, from Lemma
\ref{lem6}, it follows that this matrix is of form $(n)$. Then
referring to Definition \ref{def8}, the matrix must have a zero
submatrix of order $(n+m^\prime-t-k+1)\times k$. Here, $t$ can be
chosen as $n$, then matrix has a zero submatrix of order
$(m^\prime-k+1)\times k$. For this $(m^\prime-k+1)$ rows, there are
only $(m^\prime-k)$ nonzero columns. Consequently, the matrix is of
form II and by Lemma \ref{lem5}, there is $S$-$dilation$ in the
colored union graph $\mathcal{\tilde{G}}$ of switched linear system
(\ref{eq2}).

\textit{Sufficiency}: If there is $S$-$dilation$ in the colored
union graph $\mathcal{\tilde{G}}$, by Lemma \ref{lem5}, the matrix
is of form II, then obviously $P_1$ in this matrix can not have row
rank equal to $k$ and furthermore, this matrix can not have $g$-rank
= $n$.
\end{pf}

With the above definitions and lemmas, a graphic necessary and
sufficient condition for switched linear system to be structurally
controllable can be proposed:

 \begin{theorem}\label{the2}\rm
Switched linear system (\ref{eq2}) with graphic representations
$\mathcal{G}_i$, $i\in \{1,\ldots,m\}$, is structurally
controllable if and only if its colored union graph
$\mathcal{\tilde{G}}$ satisfies the following two conditions:
\begin{enumerate}
\item[i)]there is no nonaccessible vertex in the colored union graph
$\mathcal{\tilde{G}}$,
\item[ii)]there is no $S$-$dilation$ in the colored union graph $\mathcal{\tilde{G}}$.
\end{enumerate}
\end{theorem}

\begin{pf} \textit{~~Necessity:} \textit{(i)} If there exist nonaccessible vertices in
$\mathcal{\tilde{G}}$, by Lemma \ref{lem4}, the matrix
$[A_1+A_2+\cdots+A_m, B_1+B_2+\cdots+B_m]$ is reducible or of form
I. It follows that the controllability matrix
\begin{equation}\begin{split}[
&B_1+B_2+\ldots+B_m,(A_1+A_2+\ldots +A_m)(
B_1+B_2+\ldots+B_m),\\&(A_1+A_2+\ldots +A_m)^2(
B_1+B_2+\ldots+B_m),~\ldots,\\&(A_1+A_2+\ldots+A_m)^{n-1}(
B_1+B_2+\ldots+B_m)]\nonumber
\end{split}\end{equation}
always has at least one row that is identically zero (Remark
\ref{rem4}). It is clear that every component of the matrix, such as
$B_i,A_iB_j~and ~A_i^pA_j^qB_r$ has the same row always to be zero.
As a result, the controllability matrix
\begin{equation}\begin{split} &[B_1,\ldots,B_m,
A_1B_1,\ldots,A_mB_1,\ldots,A_mB_m,A_1^2B_1,\ldots,A_mA_1B_1,\ldots,A_1^2B_m,\ldots,
\\&A_mA_1B_m,\ldots,A_1^{n-1}B_1,\ldots,A_mA_1^{n-2}B_1,\ldots,A_1A_m^{n-2}B_m,\ldots,
A_m^{n-1}B_m] \end{split}\nonumber\end{equation} always has one zero
row and can not be of full rank $n$. Therefore, switched linear
system (\ref{eq2}) is not structurally controllable.

\textit{(ii)} Suppose that switched linear system (\ref{eq2}) is
structurally controllable, i.e., the controllability matrix
satisfies $g$-rank $\mathcal{C}$$(A_1,\ldots,A_m,B_1,\ldots,B_m)=n$.
Specifically, $\textit{Im}[B_1,\ldots,B_m,A_1B_1$, $\ldots,
A_mB_m,A_1^2B_1$, $\ldots,A_m^{n-1}B_m]=\mathbb{R}^n.$ Since $\forall
P\in \mathbb{R}^{n\times r}$,
$\textit{Im}(A_iP)\subseteq\textit{Im}(A_i)$, we have that
$\textit{Im}[B_1,\ldots,B_m$, $A_1B_1,\ldots,A_mB_m,A_1^2B_1,\ldots, A_m^{n-1}B_m]$ $\subseteq\textit{Im}[A_1,A_2,\ldots,A_m,
B_1,B_2,\ldots,B_m]\subseteq\mathbb{R}^n$. Thus condition
$g$-rank $\mathcal{C}(A_1,\ldots,A_m,$ $B_1,\ldots,B_m)=n$ requires
that $\textit{Im}[A_1,A_2,\ldots,A_m,
B_1,B_2,\ldots, B_m]=\mathbb{R}^n$ and therefore $g$-rank
$[A_1,A_2,\ldots,A_m, B_1,B_2,\ldots,B_m]=n$. However, if there is
$S$-$dilation$ in the colored union graph $\mathcal{\tilde{G}}$, by
Lemma \ref{lem5}, $g$-rank  $[A_1,A_2,\ldots,A_m,
B_1,B_2,\ldots,B_m]<n$. Consequently, the switched linear system
(\ref{eq2}) is not structurally controllable.

\textit{Sufficiency}: 
The general idea in the sufficiency proof is that we will assume
that the two graphical conditions in the theorem hold. Then a
contradiction will be found such that it is impossible that switched
linear system (\ref{eq2}) is structurally uncontrollable.

Before proceeding to switched linear system (\ref{eq2}), firstly,
consider a structured linear system:
\begin{equation}\label{r1}
\dot{x}(t)=Ax(t)+Bu(t)
\end{equation}
It is well known that system (\ref{r1}) is structurally controllable
if and only if there exists a numerical realization
$(\tilde{A},\tilde{B})$, such that rank
$(sI-\tilde{A},\tilde{B})=n,\forall s\in \mathbb{C}$. Otherwise, the
PBH test (\cite{TK}) states that system (\ref{r1}) is uncontrollable
if and only if for every numerical realization, there exists a row
vector $q\neq0$ such that $q\tilde{A}=s_0q,s_0\in \mathbb{C}$ and
$q\tilde{B}=0$, where rank $(s_0I-\tilde{A},\tilde{B})<n$.

On one hand, if for every numerical realization rank
$(sI-\tilde{A},\tilde{B})=n,\forall s\in \mathbb{C}\setminus \{0\}$,
then the uncontrollability of system (\ref{r1}) implies necessarily
that for every numerical realization there exists a vector $q\neq0$
such that $q\tilde{A}=0$ and $q\tilde{B}=0$.

On the other hand, Lemma 14.1 of \cite{K} states that, if in the
digraph associated to (\ref{r1}), every state vertex is an end
vertex of a stem (accessible), then $g$-rank $(sI-A,B)=n,\forall
s\in \mathbb{C}\setminus \{0\}$, which means that for almost all
numerical realization $(\tilde{A},\tilde{B})$, rank
$(sI-\tilde{A},\tilde{B})=n,\forall s\in \mathbb{C}\setminus \{0\}$.

Now considering switched linear system (\ref{eq2}), assume that the
two conditions in Theorem \ref{the2} are satisfied. Due to Lemma
14.1 of \cite{K}, as all the parameters of matrices
$A_1,\ldots,A_m,B_1$, $\ldots,B_m$ are assumed to be free, the
condition \textit{(i)} of Theorem \ref{the2} implies that, for
almost all vector values $\bar{u}=(\bar{u}_1,\ldots,\bar{u}_m)$, we
have $g$-rank
$(sI-(\bar{u}_1A_1+\ldots+\bar{u}_mA_m),(\bar{u}_1B_1+\ldots+\bar{u}_mB_m))=n,
\forall s\neq 0$. On the other hand, if switched linear system
(\ref{eq2}) is structurally uncontrollable, then for all constant
values, $\bar{u}=(\bar{u}_1,\ldots,\bar{u}_m)$, linear systems
defined by matrices $(\bar{A},\bar{B})$ are also uncontrollable,
where $\bar{A}=\sum^m_{i=1}\bar{u}_iA_i$ and
$\bar{B}=\sum^m_{i=1}\bar{u}_iB_i$. We write the numerical
realization of $(\bar{A},\bar{B})$ as
$(\tilde{\bar{A}},\tilde{\bar{B}})$. This is due to the fact that
for all constant values $\bar{u}$,
$\textit{Im}(\mathcal{C}(\bar{A},\bar{B})\subseteq
\textit{Im}(\mathcal{C}(A_1,\ldots,A_m,B_1,\ldots,B_m)).$ Therefore,
if the switched linear system is structurally uncontrollable, since
for almost all $\bar{u}=(\bar{u}_1,\ldots,\bar{u}_m)$, $g$-rank
$(sI-(\bar{u}_1A_1+\ldots+\bar{u}_mA_m),(\bar{u}_1B_1+\ldots+\bar{u}_mB_m))=n,
\forall s\neq 0$, we have that for every numerical realization
matrix pair $(\tilde{\bar{A}},\tilde{\bar{B}})$, there exists a
nonzero vector $q$ such that $q\tilde{\bar{A}}=0$ and
$q\tilde{\bar{B}}=0$. Since this statement is true for almost all
the values $\bar{u}=(\bar{u}_1,\ldots,\bar{u}_m)$, we have that for
almost all $n\cdot m$-tuple values
$\bar{u}^j=(\bar{u}^j_1,\ldots,\bar{u}^j_m), j=1,\ldots,n\cdot m$,
we can find nonzero vectors $q_j$ such that the following holds:
\begin{equation}\label{r2}
\left\{\begin{array}{clc}
   \sum^m_{i=1}\bar{u}^j_iq_j\tilde{A}_i=0, \hfill j=1,\ldots,n\cdot
   m\\
    \sum^m_{i=1}\bar{u}^j_iq_j\tilde{B}_i=0. \hfill  j=1,\ldots,n\cdot m\\
\end{array}\right.\end{equation}
Obviously, there can not exist more than $n$ linear independent
vectors $q_j$. Let us denote $q_1,q_2,\ldots,q_n$ the vectors such
that $span$ $(q_1,q_2,\ldots,q_{n\cdot m})\subseteq$\\ $span$
$(q_1,q_2,\ldots,q_n)$ (we can renumber the vectors if necessary).
All the vectors $q_j,j=n+1,\ldots,n\cdot m$ are linear combinations
of $q_1,q_2,\ldots,q_n$. Therefore, system (\ref{r2}) contains the
following equations:
\begin{equation}\label{r3}
\left\{\begin{array}{clc}
   \sum^n_{k=1}\sum^m_{i=1}a^j_{i,k}(\bar{u})q_k\tilde{A}_i=0 \hfill ~~~~j=1,\ldots,n\cdot m  \\
   \sum^n_{k=1}\sum^m_{i=1}a^j_{i,k}(\bar{u})q_k\tilde{B}_i=0 \hfill ~~~~j=1,\ldots,n\cdot m  \\
\end{array}\right.\end{equation}
where $a^j_{i,k}(\bar{u})$ are linear functions of
$\bar{u}^j,j=1,\ldots,n\cdot m$. Since system (\ref{r2}) is
satisfied for almost all the values, we can find
$\bar{u}^j,j=1,\ldots,n\cdot m$ such that
\begin{equation}\label{r4}
det\left[ {\begin{array}{*{20}c}
    a^1_{1,1}(\bar{u})& a^1_{1,2}(\bar{u})&\ldots&a^1_{m,n}(\bar{u})  \\
   a^2_{1,1}(\bar{u})& a^2_{1,2}(\bar{u})&\ldots&a^2_{m,n}(\bar{u})  \\
   \vdots&\vdots&\vdots&\vdots\\
   a^{n\cdot m}_{1,1}(\bar{u})& a^{n\cdot m}_{1,2}(\bar{u})&\ldots&a^{n\cdot m}_{m,n}(\bar{u})
   \end{array}} \right]\neq 0.\nonumber
\end{equation}
In this case, the only solution of (\ref{r3}) is
$q_k\tilde{A}_1=\ldots=q_k\tilde{A}_m=q_k\tilde{B}_1=\dots=q_k\tilde{B}_m=0,$
$k=1,\ldots,n$. Obviously, if the switched linear system is
structurally
uncontrollable, then 
vector $q_k,k=1,\ldots,n$ is nonzero. Consequently, switched linear
system (\ref{eq2}) is structurally uncontrollable only if for every
numerical realization there exists at least one nonzero vector $q$
such that $qA_1=\ldots=qA_m=qB_1=\dots=qB_m=0$. However, if
condition \textit{ii} of Theorem \ref{the2} is satisfied, then
$g$-rank $[A_1,\ldots,A_m,B_1,\ldots,B_m]=n$ and therefore, for at
least one numerical realization, there does not exist a vector
$q\neq 0$ such that $qA_1=\ldots=qA_m=qB_1=\dots=qB_m=0$. Hence, the
two conditions are sufficient to ensure the structural
controllability of switched linear system (\ref{eq2}).
\end{pf}

Actually, using the terminologies $`dilation'$ and $`S$-$dilation'$
as graphic criteria is not so numerically efficient. For example, to
check the second condition of Theorem \ref{the2}, we need to test
for all possible vertex subsets to see whether there exist
$S$-$dilation$ in the colored union graph or not. Consequently, we
will adopt another notion $`S$-$disjoint$ $edges$' to form a more
numerically efficient graphic interpretation of structural
controllability.
\begin{definition}\label{r5} In the colored union graph
$\mathcal{\tilde{G}}$, consider $k$ edges
$e_1=(v_1,v'_1),e_2=(v_2,v'_2),\ldots,e_k=(v_k,v'_k)$. We define for
$i=1,\ldots,k,$ $S_i$ as the set of integers $j$ such that
$v_j=v_i$, i.e., $S_i=\{1\leq j\leq k|v_j=v_i\}$. These $k$ edges
$e_1,e_2,\ldots,e_k$ are $S$-$disjoint$ if the following two
conditions are satisfied:
\begin{enumerate}
\item[i)]edges $e_1,e_2,\ldots,e_k$ have distinct end vertices,
\item[ii)] for $i=1,\ldots,k$, $S_i=\{i\}$ or there exist $r$
distinct integers $i_1,i_2,\ldots,i_r$ such that $e_{j_1}\in
\mathcal{I}_{i_1},e_{j_2}\in \mathcal{I}_{i_2},\ldots,e_{j_r}\in
\mathcal{I}_{i_r}$, where $j_1,j_2,\ldots,j_r$ are all the elements
of $S_i$.
\end{enumerate}
\end{definition}
Roughly speaking, $k$ edges are $S$-$disjoint$ if their end vertices
are all distinct and if all the edges which have the same begin
vertex can be associated to distinct indexes $i$. For this new
graphic property, the following lemma can be given:

\begin{lemma}\label{10} \rm Considering switched linear system (\ref{eq2}),
there exist $n$ $S$-$disjoint$ edges in associated colored union
graph $\mathcal{\tilde{G}}$ if and only if $[A_1,A_2,\ldots,A_m,
B_1,B_2,\ldots,B_m]$ has $g$-rank = $n$.
\end{lemma}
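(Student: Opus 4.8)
The plan is to translate ``$g$-rank $=n$'' into a combinatorial statement about the matrix $M\triangleq[A_1,\ldots,A_m,B_1,\ldots,B_m]$ and then read that statement off the colored union graph. First I would recall the standard characterization of the generic rank of a structured matrix. Since the nonzero entries of $M$ are algebraically independent free parameters, the determinant of any $n\times n$ submatrix of $M$ is either identically zero or a nonzero polynomial, and it is nonzero precisely when its Leibniz expansion has at least one surviving term, that is, when one can pick $n$ free entries of that submatrix lying in pairwise distinct rows and pairwise distinct columns (no cancellation can occur among distinct monomials in independent variables). As $M$ has exactly $n$ rows, it follows that $g$-rank $M=n$ if and only if $M$ admits $n$ free-parameter entries, no two sharing a row and no two sharing a column.

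Next I would set up the dictionary between entries of $M$ and colored edges of $\tilde{\mathcal{G}}$. Row $j$ of $M$ corresponds to the state (end) vertex $x_j$. The columns fall into $2m$ blocks; I label the column of block $A_i$ by $(i,x_\ell)$ and the column of block $B_i$ by $(i,u_\ell)$, so that every column carries a subsystem index $i\in\{1,\ldots,m\}$ together with a begin vertex $v\in\mathcal{X}\cup\mathcal{U}$. By the construction of the subgraphs $\mathcal{G}_i$ and of the colored union graph (Definition~\ref{def4}), the entry of $M$ in row $j$ and column $(i,v)$ is a free parameter if and only if $\tilde{\mathcal{G}}$ contains an edge $(v,x_j)$ carrying color $i$. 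Hence selecting $n$ free entries of $M$ in distinct rows and distinct columns is the same as selecting $n$ edges $e_1,\ldots,e_n$ of $\tilde{\mathcal{G}}$, each equipped with one of its colors $i_t$, so that the chosen edges have pairwise distinct end vertices and pairwise distinct labels $(i_t,v_t)$.

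Finally I would verify that these two requirements are exactly clauses (i) and (ii) of Definition~\ref{r5}. Distinctness of the end vertices is clause (i) verbatim. For the columns, two chosen edges can produce equal labels only when they share a begin vertex $v$, in which case distinctness of $(i_t,v)$ forces them to be assigned different colors; thus ``distinct columns'' amounts to requiring, for each group $S_i$ of chosen edges that share the begin vertex $v_i$, that its members admit pairwise distinct indices $i_1,\ldots,i_r$ with $e_{j_s}\in\mathcal{I}_{i_s}$, which is precisely clause (ii). Chaining the three equivalences gives: $n$ $S$-disjoint edges exist $\iff$ $M$ has $n$ free entries in distinct rows and columns $\iff$ $g$-rank $M=n$. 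The step demanding the most care is this last correspondence for clause (ii): because a single edge of $\tilde{\mathcal{G}}$ may carry several colors, the passage from ``distinct columns'' to ``distinct representatives'' is a system-of-distinct-representatives (Hall-type) condition on the edges sharing a begin vertex, and one must confirm that the color freedom available to separate columns is exactly the freedom encoded in clause (ii) of Definition~\ref{r5}.
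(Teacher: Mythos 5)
Your proposal is correct and follows essentially the same route as the paper: both arguments reduce $g$-rank $=n$ to the existence of $n$ free entries of $[A_1,\ldots,A_m,B_1,\ldots,B_m]$ in distinct rows and distinct columns, and then translate rows into end vertices and (block-indexed) columns into begin-vertex/color pairs to match Definition~\ref{r5}. The only difference is cosmetic: you derive the ``generic rank equals term rank'' fact directly from the Leibniz expansion and the algebraic independence of the parameters, whereas the paper cites the $g$-rank $=$ $s$-rank result from Reinschke and, for the necessity direction, exhibits an explicit realization by zeroing the remaining parameters.
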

\begin{pf}
\textit{~~Necessity:} If there exist $n$ $S$-$disjoint$ edges in
$\mathcal{\tilde{G}}$, matrix $[A_1,A_2,\ldots,A_m,\\
B_1,B_2,\ldots,B_m]$ contains at least $n$ free parameters. Since
the $n$ $S$-$disjoint$ edges have distinct end vertices, the
corresponding $n$ free parameters lie on $n$ different rows.
Besides, the $n$ $S$-$disjoint$ edges have distinct begin vertices
or have same begin vertex that can be associated to distinct indexes
$i$. This implies that these $n$ free parameters lie on $n$
different columns. keep these $n$ free parameters and set all the
other free parameters to be zero. We can see that matrix
$[A_1,A_2,\ldots,A_m, B_1,B_2,\ldots,B_m]$ has following form:
 $\left[ {\begin{array}{*{20}c}
    0& \lambda_1&0&0&\ldots&0  \\
  0& 0&0& \lambda_2&\ldots&0\\
   \vdots&\vdots&\vdots&\vdots\\
   \lambda_n&0&0&0&\ldots&0
   \end{array}}. \right]$, which has $g$-rank = $n$.

\textit{Sufficiency}: From the Definition 12.3 and the following
discussions of \cite{K}, for a structured matrix $Q$, $g$-rank $Q$ =
$s$-rank $Q$. where $s$-rank of $Q$ is defined as the maximal number
of free parameters that no two of which lie on the same row or
column. If matrix $[A_1,A_2,\ldots,A_m, B_1,B_2,\ldots,B_m]$ has
$g$-rank = $n$, it follows that there exists $n$ free parameters
from $n$ different rows
, which implies that the corresponding $n$ edges have different end
vertices, from $n$ different columns, which implies that these $n$
edges start from different vertices or start from same vertices but
can be associated to different indexes. Hence condition that matrix
has $g$-rank = $n$ is sufficient to ensure existence of $n$
$S$-$disjoint$ edges.
\end{pf}
With the above definition and lemma, another necessary and
sufficient condition for structural controllability of system
(\ref{eq2}) can be proposed here:
 \begin{theorem}\label{the6}\rm
Switched linear system (\ref{eq2}) with graphic representations
$\mathcal{G}_i$, $i\in \{1,\ldots,m\}$, is structurally controllable
if and only if its colored union graph $\mathcal{\tilde{G}}$
satisfies the following two conditions:
\begin{enumerate}
\item[i)]there is no nonaccessible vertex in the colored union graph
$\mathcal{\tilde{G}}$,
\item[ii)]there exist $n$ $S$-$disjoint$ edges in the colored union graph $\mathcal{\tilde{G}}$.
\end{enumerate}
\end{theorem}
\begin{pf}
Lemma \ref{lem5} and Lemma \ref{10} show that there exist $n$
$S$-$disjoint$ edges in the colored union graph
$\mathcal{\tilde{G}}$ if and only if there is no $S$-$dilation$ in
$\mathcal{\tilde{G}}$. Then this theorem follows immediately.
\end{pf}
\subsection{Computation Complexity of The Proposed Criteria }
Compared with condition using $`S$-$dilation'$, this condition using
`$S$-$disjoint$ edges' does not require to check all the vertex
subsets, which is a more efficient criterion. The maximal number of
`$S$-$disjoint$ edges' can be calculated using bipartite graphs. For
example, we can use the algorithm in \cite{SV}, which allows to
compute the cardinality of maximum matching into a bipartite graph.
A bipartite graph is a graph whose vertices can be divided into two
disjoint sets $\mathcal{U}$ and $\mathcal{W}$ such that every edge
connects a vertex in $\mathcal{U}$ to one in $\mathcal{W}$. To build
a bipartite graph in directed subgraph $\mathcal{G}_i(\mathcal{V}_i,
\mathcal{I}_i)$, what we need to do is adding some vertices and
making $\mathcal{U}_i=\{v\in \mathcal{V}_i|\exists (v,v')\in
\mathcal{I}_i\}$, which implies that cardinality $|\mathcal{U}_i|$
equals to the number of nonzero columns in matrix $[A_i, B_i]$.
Besides, $\mathcal{W}_i=\mathcal{X}_i$, i.e., the state vertex set.
Then it follows that the maximum matching in this bipartite graph is
the same as the maximal $S$-$disjoint$ edge set in
$\mathcal{G}_i(\mathcal{V}_i, \mathcal{I}_i)$. According to
definition of $S$-$disjoint$ edges, the beginning vertex from
different subgraphs should be differentiated when building the
bipartite graph for colored union graph $\mathcal{\tilde{G}}$.
Therefore for the bipartite graph of $\mathcal{\tilde{G}}$,
$\mathcal{U}=\{v|\exists (v,v')\in \mathcal{I}_i, i=1,2,\ldots,m\}$,
which implies that cardinality $|\mathcal{U}|$ equals to the number
of nonzero columns in matrix $[A_1,A_2,\ldots,A_m,
B_1,B_2,\ldots,B_m]$. And $\mathcal{W}=\mathcal{X}$, i.e., the state
vertex set. Similarly, the maximum matching in this bipartite graph
is the same as the maximal $S$-$disjoint$ edge set in colored union
graph. Therefore the complexity order of algorithm using method in
\cite{SV} is $O(\sqrt{p+n}\cdot q)$, where $q$ is the number of
edges in colored union graph, i.e., the number of free parameters in
all system matrices, $p$ is the number of nonzero columns in matrix
$[A_1,A_2,\ldots,A_m, B_1,B_2,\ldots,B_m]$ and $n$ is number of
state variables. Compared with condition (ii) of Theorem \ref{the6},
condition (i) of Theorem \ref{the6} is easier to check. We have to
look for paths which connect each state vertex with one of the input
vertex. This is a standard task of algorithmic graph theory. For
example, depth-first search or breadth-first search algorithm for
traversing a graph can be adopted and the complexity order is
$O(|V|+|E|)$, where $|V|$ and $|E|$ are cardinalities of vertex set
and edge set in union graph.

\subsection{Illustrative Examples}

Consider a switched linear system with two subsystems as depicted by
the graphic topologies in Fig. 1(a)-(b). In colored union graph
$\mathcal{\tilde{G}}$ (Fig. 1(d)), thin lines represent edges from
subgraph (a) and thick lines represent the edges from subgraph (b).
It turns out that the colored union graph $\mathcal{\tilde{G}}$ has
no nonaccessible vertex and no $S$-$dilation$. Besides, the three
edges are $S$-$disjoint$ edges since they have different end
vertices and one edge begins at vertex 3 and two edges begin at
vertex 0 but they come from different subsystems. \vspace{0.6in}
\setlength{\unitlength}{0.0215in}

\begin{picture}(80,35)

\put(45,20){\circle{2.5}}\put(75,20){\circle{2.5}}\put(75,20){\circle{2.5}}
\put(95,20){\circle{2.5}}\put(125,20){\circle{2.5}}\put(75,20){\circle{2.5}}
\put(145,20){\circle{2.5}}\put(175,20){\circle{2.5}}\put(75,20){\circle{2.5}}
\put(45,35){\circle{2.5}} \put(95,35){\circle{2.5}}
\put(145,35){\circle{2.5}}

\put(45,50){\circle{2.5}}
\put(95,50){\circle{2.5}}
\put(145,50){\circle{2.5}}

\put(173,20){\vector(-1,0){27}}
\put(173,21){\vector(-1,1){28}}
\put(145,21){\vector(0,1){13.5}} \put(73,20){\vector(-1,0){27}}
\put(95,21){\vector(0,1){13.5}}
\put(123,21){\vector(-1,1){28}}

\put(42,20){\makebox(0,0)[c]{$3$}}\put(92,20){\makebox(0,0)[c]{$3$}}\put(142,20){\makebox(0,0)[c]{$3$}}
\put(79,20){\makebox(0,0)[c]{$0$}}\put(129,20){\makebox(0,0)[c]{$0$}}\put(179,20){\makebox(0,0)[c]{$0$}}
\put(42,50){\makebox(0,0)[c]{$1$}}\put(92,50){\makebox(0,0)[c]{$1$}}\put(142,50){\makebox(0,0)[c]{$1$}}
\put(42,35){\makebox(0,0)[c]{$2$}}\put(92,35){\makebox(0,0)[c]{$2$}}\put(142,35){\makebox(0,0)[c]{$2$}}

\put(60,10){\makebox(-2,0){$(a)$}}
\put(110,10){\makebox(-2,0){$(b)$}}
\put(160,10){\makebox(-2,0){$(c)$}}
\put(210,10){\makebox(-2,0){$(d)$}}
\put(195,20){\circle{2.5}}\put(225,20){\circle{2.5}}\put(195,35){\circle{2.5}}\put(195,50){\circle{2.5}}
\put(192,20){\makebox(0,0)[c]{$3$}}\put(229,20){\makebox(0,0)[c]{$0$}}\put(192,35){\makebox(0,0)[c]{$2$}}

\put(224,20){\vector(-1,0){27}} \thicklines
\put(224,20){\vector(-1,1){29}} \put(195,20){\vector(0,1){13.5}}
\put(192,50){\makebox(0,0)[c]{$1$}}
\put(130,-5){\makebox(15,0)[c]{{\footnotesize Fig. 1. Switched
linear system with two subsystems }}}

\end{picture}
\vspace{0.1in}

According to Theorem \ref{the2} or \ref{the6}, the switched linear
system is structurally controllable. On the other hand, the system
matrices of each subsystem of corresponding subgraph are:
\begin{equation}\label{124}
A_1= \left[ {\begin{array}{*{20}c}
   0&0 &0  \\
   0&0 & 0  \\
   0&0 &0
\end{array}} \right],~~B_1=\left[ {\begin{array}{*{20}c}
   0  \\
   0 \\
   \lambda_1
\end{array}} \right];
A_2= \left[ {\begin{array}{*{20}c}
   0&0 & 0 \\
   0&0 & \lambda_2  \\
   0&0 & 0
\end{array}} \right],~~B_2=\left[ {\begin{array}{*{20}c}
  \lambda_3 \\
   0  \\
   0
\end{array}} \right].\nonumber  \end{equation}
controllability matrix (\ref{eq3}) can be calculated and can be
shown to have $g$-$rank$=3. In addition, there exist a $dilation$ in
union graph Fig. 1(c), which shows that the condition in Theorem
\ref{the1} is not necessary for structural controllability.

In the following example, we will consider a real control object
with switched linear system model: A PWM-Driven Boost Converter
\cite{WL} as illustrated in Fig. 2.

In this electrical network, $L$ is the inductance, $C$ the
capacitance, $R$ the load resistance, and $e_S(t)$ the source
voltage. With this converter, the source voltage $e_S(t)$ can be
transformed into a higher voltage $e_C(t)$ over the load $R$. The
switch $s(t)$, which is supposed to have two states, namely, 0 and
1, is controlled by a PWM device.
By introducing the normalized variables $\tau=t/T$
, $L_1=L/T$, and $C_1=C/T$, the dynamics for the Boost converter are
described as follows:
\begin{equation}
\begin{array}{clc}
   \dot{e}_C(\tau)=\frac{-1}{RC_1}{e}_C(\tau)+(1-s(\tau))\frac{1}{C_1}i_L(\tau),\\
  \dot{i}_L(\tau)=-(1-s(\tau))\frac{1}{L_1}{e}_C(\tau)+s(\tau))\frac{1}{L_1}e_S(\tau),\\
\end{array}\end{equation}
Let $x_1 = e_C$, $x_2 = i_L$, $u = e_S$ $\sigma= s + 1,$ then the
system dynamics can be described as:
\begin{equation}\dot{x}=A_{\sigma}x+B_{\sigma}u,\sigma\in\{1,2\}
\end{equation}
where:
\begin{equation}
A_1= \left[ {\begin{array}{*{20}c}
   -\frac{1}{RC_1}&\frac{1}{C_1}  \\
   -\frac{1}{L_1}&0 \\

\end{array}} \right],~~B_1=\left[ {\begin{array}{*{20}c}
   0\\
   0\\

\end{array}} \right];
A_2= \left[ {\begin{array}{*{20}c}
  -\frac{1}{RC_1}&0  \\
   0&0 \\
\end{array}} \right],~~B_2=\left[ {\begin{array}{*{20}c}
      0\\
    \frac{1}{L_1}\\
\end{array}} \right].\nonumber  \end{equation}
Modeling this system using independent parameter and zero elements,
we have that
\begin{equation}
A_1= \left[ {\begin{array}{*{20}c}
   \lambda_1& \lambda_2\\
    \lambda_3&0 \\

\end{array}} \right],~~B_1=\left[ {\begin{array}{*{20}c}
   0\\
   0\\

\end{array}} \right];
A_2= \left[ {\begin{array}{*{20}c}
   \lambda_4&0  \\
   0&0 \\
\end{array}} \right],~~B_2=\left[ {\begin{array}{*{20}c}
      0\\
     \lambda_5\\
\end{array}} \right].\nonumber  \end{equation}
\begin{figure}
\includegraphics[width=0.67\textwidth]{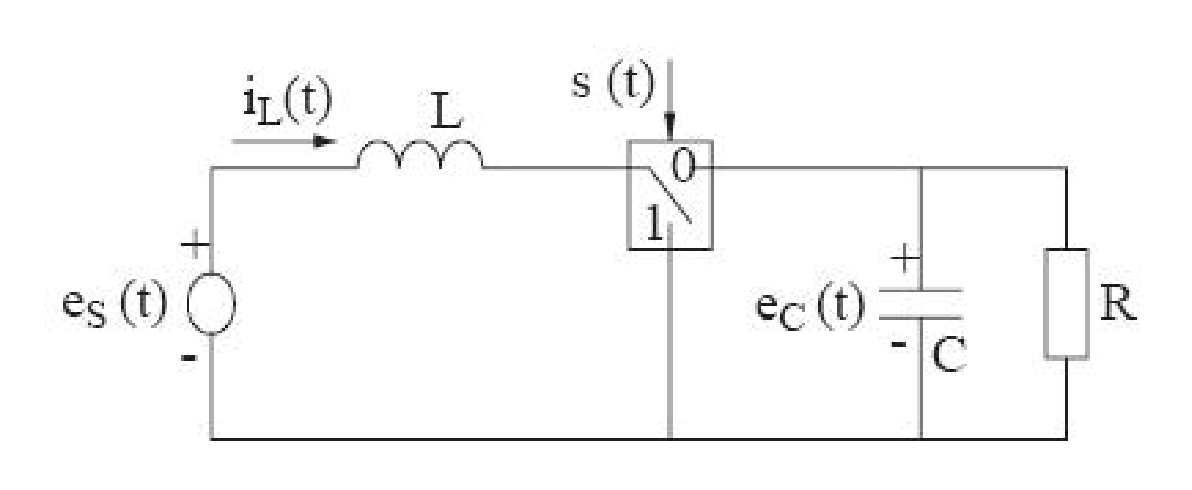}\centering
\\
\centerline{\footnotesize Fig. 2. The boost Converter} \label{fig2}
\end{figure}

%
\begin{figure}
\includegraphics[width=0.7\textwidth]{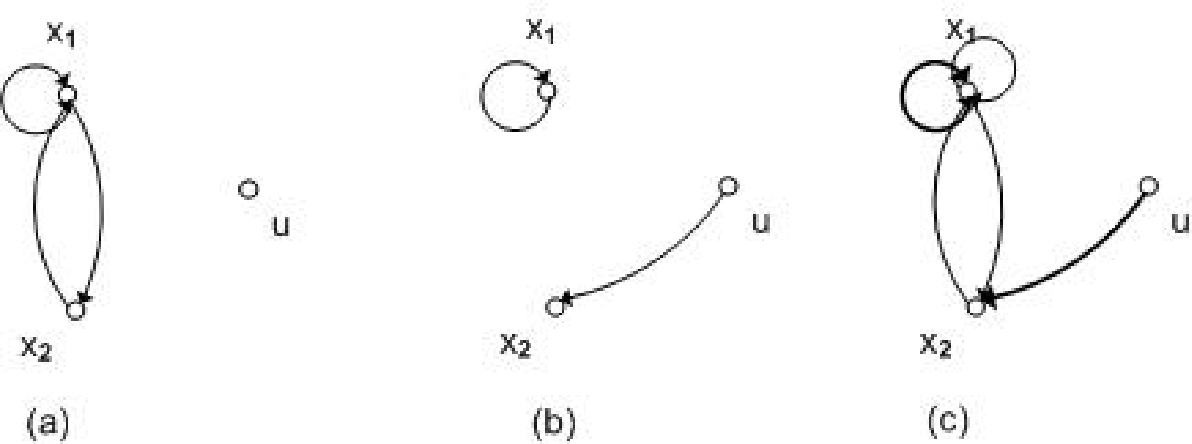}\centering
\\
\centerline{\footnotesize Fig. 3. Switched linear system with two
subsystems} \label{fig4}
\end{figure}

The two subsystems are depicted by the graphic topologies in Fig.
3(a)-(b). In colored union graph $\mathcal{\tilde{G}}$ (Fig. 3(c)),
thin lines represent edges from subgraph (a) and thick lines
represent the edges from subgraph (b). It turns out that the colored
union graph $\mathcal{\tilde{G}}$ has no nonaccessible vertex and no
$S$-$dilation$. Besides, the edge starting from $x_2$ and ending at
$x_1$ with index $(a)$ together with the edge starting from $u$ and
ending at $x_2$ with index $(b)$ consist of two $S$-$disjoint$ edges
since they have different starting and ending vertices. According to
the results obtained above, this switched electrical network is
structurally controllable and similarly the rank condition can be
checked that it has full $g$-$rank$ 2.

Form the above example, we can see that in some real applications
there are some dependent parameters among subsystems (since under
our independent case, the structural controllability holds for
almost all values of the free parameters, the dependent case can be
treated as a further extension but will not belittle the
significance of results obtained above). For further investigation
purpose, next we will use examples to illustrate that the dependence
among system parameters will make some edges `useless' or
`excessive' in judging the structural controllability. See the
following switched linear system first $A_1= \left[
{\begin{array}{*{20}c}
   0&0  \\
   0&0 \\
\end{array}} \right],~~B_1=\left[ {\begin{array}{*{20}c}
   \lambda_1\\
    \lambda_2\\
\end{array}} \right];
A_2= \left[ {\begin{array}{*{20}c}
  0&0  \\
   0&0 \\
\end{array}} \right],~~B_2=\left[ {\begin{array}{*{20}c}
      \lambda_3\\
    \lambda_4\\
\end{array}} \right].\nonumber$
According to Theorem \ref{the2} or \ref{the6}, this system is
structurally controllable. However, if dependent parameters are
considered, see the following switched linear system (a linear
system actually) $A_1= \left[ {\begin{array}{*{20}c}
   0&0  \\
   0&0 \\
\end{array}} \right],~~B_1=\left[ {\begin{array}{*{20}c}
   \lambda_1\\
    \lambda_2\\
\end{array}} \right];
A_2= \left[ {\begin{array}{*{20}c}
  0&0  \\
   0&0 \\
\end{array}} \right],~~B_2=\left[ {\begin{array}{*{20}c}
      \lambda_1\\
    \lambda_2\\
\end{array}} \right].\nonumber$
The dependence of all the parameters in matrix $B_1$ and $B_2$ makes
this system not structurally controllable and the results in Theorem
\ref{the2} or \ref{the6} not hold, even though it would be
structurally controllable if the parameters in $B_2$ are replaced
with $\lambda_3$ and $\lambda_4$ or simply remove $\lambda_1$ or
$\lambda_2$ in the second subsystem.

\section{Conclusions and Future Work}
In this paper, structural controllability for switched linear
systems has been investigated. Combining the knowledge in the
literature of switched linear systems and graph theory, several
graphic necessary and sufficient conditions for the structurally
controllability of switched linear systems have been proposed. 
These graphic interpretations provide us a better understanding on
how the graphic topologies of switched linear systems will influence
or determine the structural controllability of switched linear
systems. This shows us a new perspective that we can design the
switching algorithm to make the switched linear system structurally
controllable conveniently just having to make sure some properties
of the corresponding graph (union or colored union graph) are kept
during the switching process. In this paper, the parameters in
different subsystem models are assumed to be independent. A more
general assumption is that some free parameters remain the same
among different subsystems switching, i.e., dependence among
subsystems. It turns out that our necessary and sufficient condition
derived here would be a necessary condition under this dependence
assumption. Besides, our result can be treated as basic starting
point for exploring the structural controllability of switched
nonlinear systems: using Lie algebra or transfer function methods to
get full characterization for controllability of switched non-
linear system, then try to interpret each condition into graphic one
and ¡¥nally combine these conditions together to get graphic
interpretations for structural controllability for switched
nonlinear system. To obtain a full characterization for the
dependent case or switched nonlinear case needs further
investigation.

\renewcommand{\baselinestretch}{1}

\end{document}